\newtheorem{theorem}{\bf Theorem}
\newtheorem{definition}{\bf Definition}
\newtheorem{remark}{Remark}
\newcommand{\Rmnum}[1]{\expandafter\@slowromancap\romannumeral #1@}
\begin{document}
\title{Peer-to-Peer Energy Trading with Sustainable User Participation: A Game Theoretic Approach}
\author{Wayes Tushar, Tapan Saha, Chau Yuen, Na Li, Paul Liddell, Richard Bean, H. Vincent Poor, and Kristin Wood}
\author{Wayes~Tushar,~\IEEEmembership{Senior Member,~IEEE,}~Tapan~Kumar~Saha,~\IEEEmembership{Senior Member,~IEEE,}~Chau~Yuen,~\IEEEmembership{Senior Member,~IEEE,}~Paul Liddell, Richard Bean, and~H.~Vincent~Poor,~\IEEEmembership{Fellow,~IEEE}
\thanks{W. Tushar and T. K. Saha are with the School of Information Technology and Electrical Engineering of the University of Queensland, Brisbane, QLD 4072, Australia (e-mail: wayes.tushar.t@ieee.org; saha@itee.uq.edu.au).}
\thanks{C. Yuen is with the Engineering Product Development Pillar of the Singapore University of Technology and Design (SUTD), 8 Somapah Road, Singapore 487372. (e-mail: yuenchau@sutd.edu.sg).}
\thanks{P. Liddell and R. Bean are with the Redback Technologies, Indooroolippy, QLD 4068, Australia. (e-mail: paul@redbacktech.com; richard@redbacktech.com).}
\thanks{H. V. Poor is with the Department of Electrical Engineering at Princeton University, Princeton, NJ 08544, USA. (e-mail: poor@princeton.edu).}
\thanks{This work is supported in part by the Advance Queensland Research Fellowship AQRF11016-17RD2, which is jointly sponsored by the State of Queensland through the Department of Science, Information Technology and Innovation, the University of Queensland, and Redback Technologies, in part by NSFC 61750110529, and in part by the U.S. National Science Foundation under Grant ECCS-1549881.}
}
\IEEEoverridecommandlockouts
\maketitle
\begin{abstract}
This paper explores the feasibility of social cooperation between prosumers within an energy network in establishing their sustainable participation in peer-to-peer (P2P) energy trading. In particular, a canonical coalition game (CCG) is utilized to propose a P2P energy trading scheme, in which a set of participating prosumers form a coalition group to trade their energy, if there is any, with one another. By exploring the concept of the core of the designed CCG framework, the mid-market rate is utilized as a pricing mechanism of the proposed P2P trading to confirm the stability of the coalition as well as to guarantee the benefit to the prosumers for forming the social coalition. The paper further introduces the motivational psychology models that are relevant to the proposed P2P scheme and it is shown that the outcomes of proposed P2P energy trading scheme satisfy the discussed models. Consequently, it is proven that the proposed scheme is consumer-centric that has the potential to corroborate sustainable prosumers participation in P2P energy trading. Finally, some numerical examples are provided to demonstrate the beneficial properties of the proposed scheme.
\end{abstract}
\begin{IEEEkeywords}
Peer-to-peer trading, social cooperation, coalition game, consumer-centric, motivational psychology.
\end{IEEEkeywords}
 \setcounter{page}{1}
\section{Introduction}
\label{sec:introduction}
The global market of rooftop solar photovoltaic (PV) panels, which was US$\$30$ billion in $2016$, is expected to grow by $11\%$ over the next six years. This shift towards solar will further be complemented by an additional increase in residential energy storage systems whose ability to deliver is expected to grow from $95$ MW in $2016$ to $3700$ MW by $2025$~\cite{Peck_Spectrum_2017}. Such additional energy resources at the edge of the grid are expected to be utilized not only to manage the energy demand more efficiently but also to enable a significant mix of clean renewable energy into the grid~\cite{Peck_Spectrum_2017}. However, to make this happen in reality, it is of utmost importance to incorporate the \textit{people who own these generating assets} in their homes, that is, the prosumers~\cite{NLiu_TII_2017},  into the energy market~\cite{Peck_Spectrum_2017}. 

The important role of prosumers in the deregulated energy market is well recognized. For example, the world has already seen the participation of prosumers in the energy market through the feed-in-tariff (FiT) scheme~\cite{Liang-ChengYe-AE:2017}. In FiT, the prosumers with roof-top solar panels can sell their excess solar energy to the grid and can buy energy from the grid in case of any energy deficiency. Unfortunately, the benefit to the prosumers for participating in FiT is not significant \cite{Tushar-TIE:2015}, which has influenced the recent discontinuation of some of the FiT schemes, e.g.,  in Australia \cite{FiTPolicy_2015}. Further, net metering is also used to enable bi-directional trading of energy between prosumers and the grid. However,  increased penetration of non-dispatchable solar energy into the grid can potentially compromise the grid's stability. As a consequence, local governments in many countries have imposed solar export limits on the prosumer.

Given this context, an alternative approach to engaging prosumers in the energy trading market, a proposal on the application of Peer-to-Peer (P2P) trading mechanism in the energy domain is introduced in \cite{Ambrosio_EM_2016}, which can potentially eliminate the limitations of the FiT scheme, and thus contribute towards substantially increasing the percentage of renewable energy penetration into the current electricity grid. Moreover, the P2P trading mechanism not only benefits prosumers economically but also helps the grid to maintain its stability by enabling prosumers to inject to the grid via net metering within the government imposed export threshold. Thus, considering the potential of this P2P trading, a considerable number of pilot projects are being implemented in the USA, Europe, and Australia~\cite{ZhangChenghua_EP_2017}. Nevertheless, the key question that needs to be answered for the successful establishment of a P2P energy trading platform is: how to prepare the P2P trading as a consumer-centric\footnote{The consumer-centric property of a technology, for example, P2P energy trading, properly incentivizes consumers to actively participate in the trading process~\cite{Saad_ProcIEEE_2016}.} scheme that will ensure sustainable participation of prosumers in the P2P energy trading market. Seeking a suitable answer to this question is particularly important due to the following two characteristics of a P2P energy trading market:
\begin{itemize}
\item In P2P trading, the main objective is to encourage the participating prosumers to trade energy with one another with a very low (or, not at all) direct influence from the central controller (for example, the retailer). However, relaxing the influence of the central control body makes the P2P a trustless system~\cite{PowerLedger_2017}. Hence, it would be a challenging task to encourage the prosumers to cooperate to trade energy with one another in such a system.
\item In recent P2P pilot projects such as the Brooklyn microgrid, it is identified that when the prosumers socially interact with one another to exchange their generated solar energy, the trading price per unit of energy may increase substantially~\cite{Mengelkamp_AppliedEnergy_2017}. This may potentially limit the involvement of rational energy users who participate as buyers in the energy market.
\end{itemize} 

Given this context, this paper investigates how sustainable users' participation in P2P energy trading can be established through social cooperation between the prosumers. In particular, we are interested in designing a P2P energy trading technique that encourages the prosumers to participate via forming a coalition group among themselves despite the trustless nature of the system. The trading scheme also needs to be beneficial for the prosumers every time they participate irrespective of their roles, that is, whether they are participating as the buyers or as the sellers. In other words, the trading scheme needs to satisfy the consumer-centric property, where the prosumers would always find it beneficial to participate~\cite{Tushar-TSG:2014}. 

To this end, we have made following contributions in this paper: 1) We propose a P2P trading technique through utilizing the social cooperation between different prosumers within an energy system by designing a Canonical Coalition Game (CCG). By setting the rules of a social coalition with the developed CCG, we show that it is always beneficial to the prosumers of the system, for the considered value function and assumption of this study, to cooperate with one another for trading energy among themselves; 2) By exploring the idea of the core, it is proven that there always exists a revenue distribution that lies within the core, i.e., the core is non-empty. We use a mid-market rate pricing scheme for distributing revenues within the participating prosumers, and show that the revenue that each prosumer receives for forming the coalition lies within the core of the game, and thus ensures \textit{stable} social cooperation; 3) We demonstrate that the proposed scheme is a \textit{consumer-centric technique} that has the capacity to enable significant user participation in energy trading. To do so, we first introduce two behavioral models from motivational psychology which underline the characteristics that have the capability to motivate prosumers to always participate in energy trading. Then, we show that our proposed scheme satisfies both models, and therefore can be considered as a consumer-centric trading scheme; 4) Finally, we validate the properties of the proposed scheme, that is, stability and the consumer-centric property, through real data based numerical simulation.

The remainder of the paper is organized as follows. In Section~\ref{sec:survey}, we provide a survey of existing studies on P2P energy trading. The system model of the proposed work is explained in Section~\ref{sec:system-model} followed by the proposed CCG in Section~\ref{sec:CoalitionGame}. In Section~\ref{sec:properties}, we study the properties of the proposed CCG inspired P2P trading scheme. Some numerical case studies are demonstrated in Section~\ref{sec:CaseStudy}, and finally, we draw some concluding remarks in Section~\ref{sec:Conclusion}.

\section{State-of-the-Art}\label{sec:survey}
Recently, there has been a considerable amount of research effort to understand the potential of P2P trading in the energy sector. The research focus of existing literature can be broadly divided into three general domains: electric vehicles domain, microgrids domain, and distribution networks domain.

\subsection{Electric vehicles domain} In this domain, the main discussion is on the exchange of energy between two sets of electric vehicles, that is, buyer set and seller set, in order to achieve an economically beneficial energy trading platform for all involved electric vehicles in the trading process. Examples of such schemes can be found in \cite{Alvaro-Hermana_ITSM_2016} and\cite{Kang_TII_2017}. In \cite{Alvaro-Hermana_ITSM_2016}, the authors target reducing the impact of the electric vehicle charging process on the power system during business hours. To do so, they exploit an activity-based model to predict the activity of drivers during business hours and then utilize that information to determine an optimal P2P trading price per area and time slot to enable vehicles with excess energy to share their battery energy with vehicles in need of energy during business hours.  Using Consortium Blockchain, the authors in \cite{Kang_TII_2017} propose a secure localized P2P energy trading model for hybrid electric vehicles via a double auction mechanism. It is shown that the proposed trading method can achieve social welfare, improve transaction security, and protect the privacy of the vehicle owners.  
\subsection{Microgrids domain}In the microgrid domain, existing literature on P2P trading focuses on three different trading paradigms: 1) energy trading between different microgrids, 2) energy trading between players within a microgrid, and 3) energy trading between peers and microgrid.
\subsubsection{Trading between different microgrids} Coordinated energy management with networked microgrids has been widely discussed in the literature, for example, \cite{Fathi_TSE_2013, Liu_Smartgridcomm_2015, Moslehi_TPS_2018,Wang_TIE_2016}. In \cite{Fathi_TSE_2013}, a cooperative power dispatching algorithm of interactions among microgrids is proposed for power sharing within the grid. To handle the mismatch between the supply and demand of energy in microgrids, a P2P energy sharing among microgrids is proposed in \cite{Liu_Smartgridcomm_2015} for improving the utilization of distributed energy resources and saving electricity bills for all participating microgrids. In \cite{Moslehi_TPS_2018}, the authors introduce the concept of a nested transactive grid to model the distribution grid as a nested set of virtual microgrids, where each microgrid can act as a market. This facilitates P2P trading while incorporating the security of the grid. Finally, a reinforcement learning based energy trading game among smart microgrids is implemented in \cite{Wang_TIE_2016} that enables each microgrid to individually and randomly choose a strategy to trade the energy in an independent market. 
\subsubsection{Trading between peers within a microgrid} There has also been an increasing interest in modeling P2P trading between prosumers within a microgrid in recent times.  For example, with a target to mitigate the intermittency of renewable generation within microgrids, a concept of distributed generation combined with cooperation by exchanging energy among distributed resources is proposed in \cite{Lakshminarayana_JSAC_2014}. It is shown that in the presence of limited storage devices, the grid can benefit greatly from cooperation, which is reduced in the presence of large storage capacity. In \cite{Liu_TPS_2017}, the authors propose a P2P energy sharing model with price-based demand response, which is shown to be effective in reducing prosumer costs and improving the sharing of photovoltaic energy. Other examples of energy trading between peers within a microgrid can be found in \cite{Lee_JSAC_2014,Tushar_TSG_May_2016,Morstyn_TSG_2018,Werth_TSG_2018,Zhang_TII_2017,Zizzo_TII_2018}.
\subsubsection{Trading between peers and microgrid} An interesting concept of peer-to-microgrid exchange of energy is proposed in \cite{Stevanoni_TSG_2018} with the purpose of long-term planning for connected industrial microgrids. Essentially, the authors propose a new system of daily operation including industrial Load Management and allowing peer-to-microgrid as well as external energy exchanges. The developed tool is tested on a virtual industrial microgrid set up to present the technical and economic outputs. 
\subsection{Distribution networks domain}In this domain, the focus of the studies is to address new challenges for distribution grid operators since they face electricity feed-in at low voltage levels not foreseen when the grid layout was planned~\cite{Thomsen_Elsevier_Dec_2017}. In this context, the authors in \cite{Almasalma_Conference_June_2017} propose a dual-decomposition-based P2P algorithm to control voltage for the distribution network by actively managing the active and reactive power of DERs. An agent-based distributed power flow solver is studied in \cite{Nguyen_TSG_May_2015} to deal with problems from a completely distributed perspective. The study presented in \cite{Zizzo_TII_2018} focuses on the variation of power losses due to the superposition of P2P energy transactions in a microgrid. In particular, the authors propose using a blockchain for handling energy loss allocation and define a new timing for transacting intended P2P energy exchanges. Other examples of studies in this domain can be found in  \cite{Werth_TSG_July_2015} and \cite{Jogunola_Energies_2017}.

As evident from the above discussion, the surveyed studies have made significant contributions to the field of P2P trading. However, most of the discussed trading schemes have not emphasized the users' point of view on the adoption of such techniques. Note that consumer-centric design is important for the sustainable use of the techniques in the long run as pointed out by \cite{Tushar-TSG:2014}, \cite{Saad_ProcIEEE_2016} and \cite{Gangale_Policy_2013}. Therefore, in this work, we seek to complement the existing work by demonstrating how social cooperation between prosumers can lead to a consumer-centric energy trading mechanism by making the contributions discussed in Section~\ref{sec:introduction}. To this end, we begin by developing a suitable system model in the next section to use in the rest of the paper.

\section{System Description}\label{sec:system-model}
\begin{figure}
\centering
\includegraphics[width=\columnwidth]{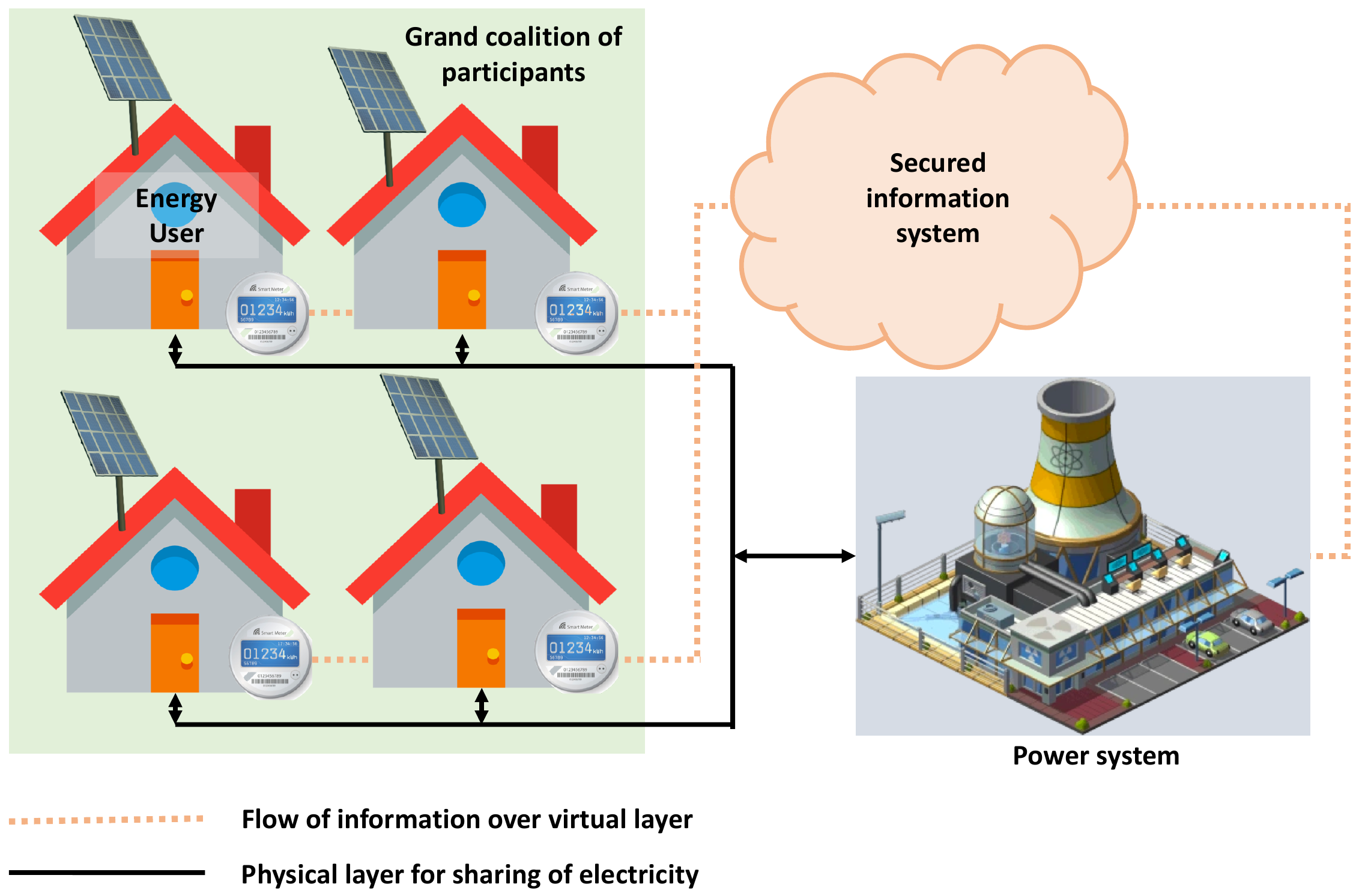}
\caption{This figure demonstrates the set up of an energy system in which the users can potentially form coalitions with one another to participate in the proposed P2P energy trading scheme.}
\label{fig:SystemModel}
\end{figure}
Let us assume a smart energy system consisting of a centralized power station (CPS) and $N$ prosumers, where $N = |\mathcal{N}|$. Each prosumer $n\in\mathcal{N}$ can be considered as a small user of energy in such a house, which is equipped with a rooftop solar panel without any storage facility\footnote{An example of such a system is the grid-tie solar system without a storage device~\cite{Tushar_TSG_July_2017}.} and can produce energy that it can either use for its household activities or trade with other energy entities within the energy network. Each prosumer $n\in\mathcal{N}$ is also equipped with a smart meter\footnote{Here, a smart meter refers to a smart hybrid system consisting of a number of necessary meters and devices as a single package to perform the relevant tasks of P2P trading.}, which is capable of determining how much energy the rooftop solar panel of the prosumer is generating, how much energy the prosumer is consuming, and how much energy the prosumer is selling to and/or buying from the CPS or any other entities, when necessary. 

The energy network under consideration is divided into two layers including a physical layer and a virtual layer.  In the \textit{physical layer}, all prosumers are connected to the energy network via a traditional distribution network (set up and run by the independent system operator), through which the physical exchange of energy takes place between the prosumers and the CPS, as well as the energy exchange between different prosumers within the distribution network. Examples of studies of the physical layer include \cite{Zeraati_TSG_EA_2016,Rijcke_TPS_Sept_2015,Zizzo_TII_EA_2018,Olek_TIE_Apr_2015}. However, the focus of this study is the virtual layer.

The exchange of all information is conducted over the \textit{virtual layer} of the network. Studies related to this layer focus on understanding the effect of economic factors on the customer's decision making process, for example, modeling customer behavior \cite{Navid_TSG_Nov_2015}, designing incentives and pricing schemes \cite{Tushar_TSG_July_2017}, modeling interactions between buyers and sellers \cite{Lee_TIE_June_2015}, and designing a consumer-centric scheme \cite{Rahi_TSG_EA_2018}. The work presented in this paper is exclusively relevant to this layer with the purpose of understanding the effects of economic factors, that is, pricing and the subsequent utility to the prosumers, on motivating extensive prosumer participation in energy trading. A common example of the virtual layer is the blockchain based layer~\cite{PowerLedger_2017}. Blockchain is being used by a number of P2P energy trading projects that are capable of distinguishing between energy transactions metered at the service entrance of each user and normal transactions with the grid (the utility, the supplier of electricity) or P2P transactions with other users, in order to bill correspondingly. An overview of the proposed system is shown in Fig.~\ref{fig:SystemModel}.

Let us consider that at any time during a day, each prosumer $n\in\mathcal{N}$, has an energy demand of $E_{n,d}$, and the energy generated from its rooftop solar panel is $E_{n,\text{pv}}$. Now, due to the fact that the energy generated from the solar panel is free of charge, it is reasonable to assume that each prosumer $n$ preferably consumes energy from its own rooftop solar panel to meet its own demand. Thus, the amount of energy consumed by the prosumer $n$ from its own generation can be expressed as
\begin{eqnarray}
E_{n,c} = \min(E_{n,d}, E_{n,\text{pv}}).
\label{eqn:ConsumedEnergy}
\end{eqnarray}
Now, depending on the values $E_{n,c}, E_{n,d}$ and $E_{n,\text{pv}}$, a prosumer can act either as a seller to sell its surplus energy $E_{n,\text{sur}}$ to or as a buyer to buy its deficit energy $E_{n,\text{def}}$ from the CPS or other energy entities within the network. Here,
\begin{eqnarray}
E_{n,\text{sur}} &=& E_{n,\text{pv}} - E_{n,c},\label{eqn:SurplusEnergy}~\text{and}\\
E_{n,\text{def}} &=& E_{n,d} - E_{n,c}.\label{eqn:DeficientEnergy}
\end{eqnarray}
We assume that all sellers belong to the set $\mathcal{N}_s$ and all buyers belong to the set $\mathcal{N}_b$. Clearly, $\mathcal{N}_s\cup\mathcal{N}_b = \mathcal{N}$ and $\mathcal{N}_s\cap\mathcal{N}_b = \phi$ at any particular time. Now, if the buying and selling prices per unit of energy are $p_b$ and $p_s$ respectively, the cost $J_n$ and revenue $U_n$ to each prosumer $n$ from such energy trading  is
\begin{eqnarray}
J_n = p_b E_{n,\text{def}},~\forall n\in\mathcal{N}_b,~\text{and}~U_n = p_s E_{n,\text{sur}},~\forall n\in\mathcal{N}_s,
\label{eqn:RevenueAndCost}
\end{eqnarray}
respectively. In the traditional energy market, such energy trading is usually conducted between the CPS and the prosumers. That is, each prosumer $n$ buys its energy $E_{n,\text{def}}$ from the CPS at a price $p_{b,g}$ per unit of energy, which is set by the CPS. Similarly, a prosumer sells its surplus energy $E_{n,\text{sur}}$, if there is any, to the CPS with a selling price $p_{s,g}$ defined by the CPS. Unfortunately, in general $p_{s,g}<<p_{b,g}$~\cite{Tushar-TIE:2015}, and consequently $J_n>>U_n$. Therefore, the monetary benefit that a prosumer obtains from its trading with the grid is very limited. As a consequence, there is a recent push towards a change in the operation style of the energy market, in which prosumers with energy surplus may trade the energy with another prosumer with energy deficiency via P2P energy trading within the same and/or neighboring energy network~\cite{Peck_Spectrum_2017}.

Nonetheless, to make such P2P energy trading a reality and to be a part of the energy market, it also needs to be consumer-centric, and prosumers need to find it beneficial to accept, so as to continuously participate in such trading. Otherwise, if P2P is not beneficial, they may withdraw prosumers with independent generation and storage capacity will have an incentive to go off-grid bringing an inefficient outcome, both for prosumers and the energy network~\cite{Morstyn_NatureEnergy_2018}. As such, we study a coalition game structure to show the potential of social cooperation among prosumers to develop a consumer-centric P2P energy trading framework in the next section.

\section{Coalition Game for P2P Trading Scheme}\label{sec:CoalitionGame}
A coalition game provides analytical tools to study the behavior of rational players when they cooperate~\cite{Saad_SPM_2009}. To design the proposed P2P energy trading under a coalition game framework, we consider the energy use of the entire energy network in the P2P energy trading paradigm. Therefore, the total amount of surplus energy that is available to all prosumers $\forall n\in\mathcal{N}_s$, after meeting their own demand is
\begin{eqnarray}
\sum_{n\in\mathcal{N}_s}E_{n,\text{sur}}=\sum_n^{N_s} E_{n,\text{pv}} - \sum_n^{N_s} E_{n,c},
\label{eqn:totalSurplus}
\end{eqnarray}
where $\sum_n^{N_s} E_{n,c} = \min\left(\sum_n^{N_s} E_{n,\text{pv}}, \sum_n^{N_s} E_{n,d}\right)$.
Similarly, the total energy deficiency of the prosumers $\forall n\in\mathcal{N}_b$ within the energy network is
\begin{eqnarray}
\sum_{n\in\mathcal{N}_b} E_{n,\text{def}}=\sum_n^{N_b} E_{n,d} - \sum_n^{N_b} E_{n,c}.
\label{eqn:totalDeficiency}
\end{eqnarray}
Indeed, the prosumers can sell their surplus energy to or buy their deficient energy from the CPS. Alternatively, the prosumers can also trade the energy among themselves via P2P energy trading considering the limited economic benefit from the energy trading with the CPS~\cite{Tushar-TIE:2015}. Nonetheless, the establishment of such P2P trading is contingent on the benefit that the prosumers may attain from the energy trading. For instance, if the monetary benefit from such trading is not attractive, and/or requires significant computational power, the motivation for the prosumers to adopt such trading scheme could be very low. Furthermore, for sustainable and consumer-centric P2P trading, it is also important that benefit to the prosumers is always better than the trading with the CPS every time they participate in the P2P trading~\cite{Hockenbury:2003}. 

In this context, we propose a CCG structure in the following section to demonstrate the benefits of forming a coalition among the prosumers for participating in P2P energy trading. Then, after discussing the properties of the game, we focus on how the proposed CCG based P2P energy trading structure establishes itself as a consumer-centric energy trading technique.
\subsection{Game Formulation} A CCG is characterized by a set of players $\mathcal{N} = \mathcal{N}_s\cup\mathcal{N}_b$ that forms a coalition, and a value function $\nu$ that demonstrates the worth of the coalition in terms of a numerical value. To this end, the proposed coalition game can be formally defined by its strategic form as
\begin{eqnarray}
\Gamma = \lbrace\mathcal{N},\nu\rbrace.\label{eqn:FormalDefinition}
\end{eqnarray}
Here, $\nu$ refers to the monetary amount that the participating prosumers, as a coalition group, may earn or spend during the P2P trading process. Hence, the proposed $\Gamma$ is a coalition game with transferrable utility, where the value function $\nu$ can be expressed as:
\begin{eqnarray}
\small
\nu(\mathcal{N}_s\cup\mathcal{N}_b) = p_{s,g} \max\left(0,\left(\sum_{n\in\mathcal{N}_s}E_{n,\text{sur}}-\sum_{m\in\mathcal{N}_b} E_{m,\text{def}}\right)\right)\nonumber\\- p_{b,g} \max\left(0,\left(\sum_{m\in\mathcal{N}_b} E_{m,\text{def}}- \sum_{n\in\mathcal{N}_s}E_{n,\text{sur}}\right)\right).
\label{eqn:ValueFunction}
\end{eqnarray}
Clearly, from \eqref{eqn:ValueFunction}, when the energy surplus of the prosumers in $\mathcal{N}_s$ is more than the total demand of prosumers in $\mathcal{N}_b$, the excess energy is sold to the grid (to reduce energy wastage), and vice versa. In other words, all the prosumers in $\mathcal{N}$ cooperate with one another to trade the surplus energy among themselves as a first priority, and then interact with the grid, if necessary, to sell or buy the total excess or deficient energy respectively. 

We note that there is no guarantee of forming a stable grand coalition (a single coalition of all prosumers within the network) in a CCG. The effectiveness of such a coalition is only confirmed if it is always beneficial for the prosumers to form a grand coalition, rather than acting noncooperatively or forming disjoint coalitions~\cite{Tushar_ISGTThailand_2015}. To this end, it is important that $\Gamma$ fulfils a number of requirements~\cite{Saad_SPM_2009} that are necessary for the effective and sustainable operation of the proposed P2P trading. These requirements are summarized as follows:
\begin{itemize}
\item\textit{Benefit of cooperation:} Cooperation, that is, the formation of the grand coalition, is never detrimental to any of the involved prosumers. In other words, in a CCG no group of prosumers can benefit by leaving the grand coalition, that is, by acting non-cooperatively. This is associated with the property of superadditivity of the value function of the game.
\begin{definition}
Consider the CCG $\Gamma = (\mathcal{N},\nu)$ in \eqref{eqn:FormalDefinition}, where $\nu$ is the value function of the game. Now, two disjoint subsets $\mathcal{N}_a\subseteq\mathcal{N}$ and $\mathcal{N}_b\subseteq\mathcal{N}$ will only cooperate together and form a grand coalition if $\nu$ satisfies the property of superadditivity, and therefore the following inequality holds:
\begin{equation}
\nu(\mathcal{N}_a\cup\mathcal{N}_b)\geq\nu(\mathcal{N}_a) + \nu(\mathcal{N}_b).
\label{eqn:Definition}
\end{equation}
\label{def:Superadditivity}
\end{definition}
\item\textit{Stability of coalition:} The benefit (or, revenue) of a coalition needs to be distributed among the prosumers in such a way that no individual or subgroup of prosumers have any incentive to abandon the grand coalition for further benefit. The set of feasible allocation of such revenues is defined as the core~\cite{Saad_SPM_2009}.
\begin{definition}
Let $\mathbf{e}$ be the payoff vector of the revenues that each prosumer of the CCG $\Gamma$ obtains, and the revenue of each prosumer $n\in\mathcal{N}$ is indicated as $e_n$ where $e_n\in\mathbf{e}$. Then the core of the $\Gamma$ is defined as\textit{~\cite{Saad_SPM_2009}}:
\begin{equation}
\mathcal{C} = \{\mathbf{e}:\sum_{n\in\mathcal{N}}e_n = \nu\left(\mathcal{N}\right) \text{\textit{and}} \sum_{n\in\mathcal{S}}e_n\geq\nu(\mathcal{S}),\forall\mathcal{S}\subseteq\mathcal{N}\}.
\label{eqn:Core}
\end{equation}
\label{def:Core}
\end{definition}
If $\mathcal{C}$ of the game is \textit{non-empty}, there exists a feasible allocation of revenues among the participating prosumers, in which no group of prosumers has any incentive to leave the coalition. Hence, a \textit{stable} coalition is established. Nonetheless, one way to understand whether $\Gamma$ has a non-empty core is through using the \textit{Bondareva-Shapley theorem}~\cite{Shapley_NRL_1967}, which can be defined as follows:
\begin{definition}
According to the Bondareva-Shapley theorem, the core $\mathcal{C}$ of a CCG $\Gamma$ is non-empty, if and only if for every function $f(\mathcal{S})$, where $\forall n\in\mathcal{N}:\sum_{\mathcal{S}\in\mathcal{P}_n}f(\mathcal{S})=1$, and $0\leq f(\mathcal{S})\leq 1$, the following inequality holds:
\begin{equation}
\sum_{\mathcal{S}\in\mathcal{P}\setminus\phi}f(\mathcal{S})\nu(\mathcal{S})\leq\nu(\mathcal{N}).
\label{eqn:ShapelyTheorem}
\end{equation}
Here, $\mathcal{P}$ is the power set of $\mathcal{N}$, and $\mathcal{P}_n\subseteq\mathcal{P}$ that has $n$ as one of the elements in all subsets.
\label{def:ShapelyTheorem}
\end{definition}
\end{itemize}
\section{Properties of the CCG Inspired P2P Trading Scheme}\label{sec:properties} In this section, we investigate two particular properties of the proposed scheme in order to seek answers to the following two questions: 1) is the cooperation formed between different prosumers for P2P trading stable? and 2) is the proposed P2P energy trading consumer-centric?
\subsection{Property of stability}\label{sec:stability}To do so, first we examine the superadditivity property of the value function $\nu$ of the the game.
\begin{theorem}
The value function $\nu$ of the proposed game $\Gamma$, as described in \eqref{eqn:ValueFunction}, is superadditive.
\label{theorem:1}
\end{theorem}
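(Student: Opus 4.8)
The plan is to collapse the multi-prosumer statement into a one-dimensional inequality about the \emph{net energy position} of a coalition. For an arbitrary coalition $\mathcal{A}\subseteq\mathcal{N}$, I would write its aggregate surplus and deficit as $S_{\mathcal{A}}=\sum_{n\in\mathcal{A}\cap\mathcal{N}_s}E_{n,\text{sur}}$ and $D_{\mathcal{A}}=\sum_{m\in\mathcal{A}\cap\mathcal{N}_b}E_{m,\text{def}}$, and define $z_{\mathcal{A}}=S_{\mathcal{A}}-D_{\mathcal{A}}$. Because each prosumer's individual surplus or deficit is fixed by its own $E_{n,\text{pv}},E_{n,d}$ and does not depend on whom it coalesces with, both $S_{\mathcal{A}}$ and $D_{\mathcal{A}}$ are plain sums over the members of $\mathcal{A}$. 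Equation \eqref{eqn:ValueFunction} then reads $\nu(\mathcal{A})=g(z_{\mathcal{A}})$, where $g(z)=p_{s,g}\,z$ for $z\ge 0$ and $g(z)=p_{b,g}\,z$ for $z<0$; the two branches encode selling the net surplus to, and buying the net deficit from, the CPS.

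The first key step is the additivity of net positions: for any two disjoint coalitions $\mathcal{N}_1,\mathcal{N}_2$ one has $S_{\mathcal{N}_1\cup\mathcal{N}_2}=S_{\mathcal{N}_1}+S_{\mathcal{N}_2}$ and likewise for $D$, hence $z_{\mathcal{N}_1\cup\mathcal{N}_2}=z_{\mathcal{N}_1}+z_{\mathcal{N}_2}$. With this, the superadditivity requirement \eqref{eqn:Definition} reduces to the scalar inequality $g(z_1+z_2)\ge g(z_1)+g(z_2)$ for all real $z_1,z_2$ — that is, to superadditivity of the single-variable, piecewise-linear function $g$.

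I would then close the argument in either of two ways. The elementary route is a sign case analysis: if $z_1$ and $z_2$ share a sign, both sides of $g(z_1+z_2)\ge g(z_1)+g(z_2)$ use a common slope and the relation holds with equality; the only substantive case is one net seller ($z_1\ge 0$) and one net buyer ($z_2<0$), where after cancellation the inequality reduces to $(p_{b,g}-p_{s,g})\,z_2\le 0$ (when the merged position is nonnegative) or to $(p_{b,g}-p_{s,g})\,z_1\ge 0$ (when it is negative). The slicker route notes that $g$ is positively homogeneous of degree one and is \emph{concave} precisely because its slope drops from $p_{b,g}$ to $p_{s,g}$ as the argument crosses zero; a concave, positively homogeneous function satisfies $g(z_1+z_2)=2\,g\!\left(\tfrac{z_1+z_2}{2}\right)\ge g(z_1)+g(z_2)$ by Jensen, which is exactly the claim.

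In both routes the single load-bearing fact is the market spread $p_{s,g}<p_{b,g}$ asserted in Section~\ref{sec:system-model}; this, rather than any calculation, is the real crux. It is also the economically meaningful point: when two same-signed coalitions merge nothing is gained (equality holds), whereas a net-seller coalition merging with a net-buyer coalition strictly benefits by matching energy internally instead of crossing the grid's buy/sell spread, and this is precisely where the inequality becomes strict. The only bookkeeping point to watch is the notational collision between $\mathcal{N}_b$ as the buyer set and $\mathcal{N}_b$ as a generic subset in Definition~\ref{def:Superadditivity}; I would rename the two merging coalitions to $\mathcal{N}_1,\mathcal{N}_2$ to keep the reduction to $z_1,z_2$ unambiguous.
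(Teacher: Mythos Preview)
Your proposal is correct, and your ``slicker route'' --- collapsing $\nu$ to a concave, positively homogeneous scalar function $g$ of the net position $z$ and then applying Jensen via $g(z_1+z_2)=2\,g\!\left(\tfrac{z_1+z_2}{2}\right)\ge g(z_1)+g(z_2)$ --- is precisely the paper's argument, only stated more cleanly (the paper overloads $\nu$ as both a set function and a real-variable function and relies on the same linearity/concavity pair in \eqref{eqn:theorem1}--\eqref{eqn:theorem1_2}). Your elementary sign-case analysis is a correct alternative the paper does not offer, and your observation about the $\mathcal{N}_b$ notational collision in Definition~\ref{def:Superadditivity} is well taken.
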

\begin{proof}
To prove this theorem, first we assume that $\sum_{n\in\mathcal{N}_s}E_{n,\text{sur}} - \sum_{m\in\mathcal{N}_b}E_{m,\text{sur}} = z$. Therefore, from \eqref{eqn:ValueFunction}, the value function can be expressed as
\begin{equation}
\nu = p_{s,g}\max(0,z) - p_{b,g}\max(0,-z).\label{eqn:ValueFunction3}
\end{equation}
We note that \eqref{eqn:ValueFunction3} is a concave function. 

Second, we assume that the set $\mathcal{N}_b$ of buyers and the set $\mathcal{N}_s$ of sellers can further be broken down into subsets $\mathcal{N}_{b,1}$ and $\mathcal{N}_{b,2}$, and $\mathcal{N}_{s,1}$ and $\mathcal{N}_{s,2}$ respectively, where $\mathcal{N}_{b,1}\cup\mathcal{N}_{b,2} = \mathcal{N}_b$, $\mathcal{N}_{b,1}\cap\mathcal{N}_{b,2} = \phi$, $\mathcal{N}_{s,1}\cup\mathcal{N}_{s,2} = \mathcal{N}_s$,and $\mathcal{N}_{s,1}\cap\mathcal{N}_{s,2} = \phi$. Then, due to the linearity of $\nu$ it is reasonable to write 
\begin{eqnarray}
\footnotesize
\frac{1}{2}\nu\Bigg[\sum_{n\in\mathcal{N}_s}E_{n,\text{sur}} - \sum_{m\in\mathcal{N}_b}E_{m,\text{def}}\Bigg]=\nonumber\\
\nu\Bigg[\sum_{n\in\mathcal{N}_s}\frac{E_{n,\text{sur}}}{2} - \sum_{m\in\mathcal{N}_b}\frac{E_{m,\text{def}}}{2}\Bigg]\nonumber\\=
\nu\Bigg[\Bigg[\sum_{n\in\mathcal{N}_{s,1}}\frac{E_{n,\text{sur}} }{2}-\sum_{m\in\mathcal{N}_{b,1}}\frac{E_{m,\text{def}}}{2}\Bigg]\nonumber\\+\Bigg[\sum_{n\in\mathcal{N}_{s,2}}\frac{E_{n,\text{sur}} }{2}-\sum_{m\in\mathcal{N}_{b,2}}\frac{E_{m,\text{def}}}{2}\Bigg]\Bigg].
\label{eqn:theorem1}
\end{eqnarray}
due to the linearity of the value function $\nu$. Now, according to  \cite{Lee_JSAC_2014}, due to the concavity of $\nu$, \eqref{eqn:theorem1} can be expressed based on Jensen's inequality as
\begin{eqnarray}
\footnotesize
\frac{1}{2}\nu\Bigg[\sum_{n\in\mathcal{N}_s}E_{n,\text{sur}} - \sum_{m\in\mathcal{N}_b}E_{m,\text{def}}\Bigg]\geq\nonumber\\\frac{1}{2}\nu\Bigg[\sum_{n\in\mathcal{N}_{s,1}}E_{n,\text{sur}}-\sum_{m\in\mathcal{N}_{b,1}}E_{m,\text{def}}\Bigg]\nonumber\\+\frac{1}{2}\nu\Bigg[\sum_{n\in\mathcal{N}_{s,2}}E_{n,\text{sur}}-\sum_{m\in\mathcal{N}_{b,2}}E_{m,\text{def}}\Bigg].
\label{eqn:theorem1_2}
\end{eqnarray}
From \eqref{eqn:theorem1_2}, clearly $\nu$ decreases as the number of disjoint coalitions increases, and therefore the value function $\nu$ of the proposed $\Gamma$ is superadditive.
\end{proof}
Therefore, forming a grand coalition is always beneficial for all participating prosumers in $\Gamma$. Now, we investigate if the grand coalition is stable, which is affected by the revenue that each prosumer attains by participating in the CCG $\Gamma$. For instance, if the trading price $p_{tr}\in\{p_{\text{s},tr}, p_{\text{b},tr}\}$ of the CCG is very close to $p_{b,s}$, the prosumers in $\mathcal{N}_s$ will be very satisfied. However, the buyer of set $\mathcal{N}_b$ will not have any motivation to stay in the coalition as the purchase price per unit of energy is too close to the grid price. Similarly, if $p_{tr}\approx p_{s,g}$, the sellers of the game will not be encouraged to stay within the coalition. As such, it is imperative that there exists a trading price $p_{s,g}\leq p_{tr}\leq p_{b,g}$, which would produce a set of revenues that would make the coalition stable. In this context, now we state and prove the following theorem.
\begin{theorem}
Under the grid's current pricing scheme, where $p_{b,g}>p_{s,g}$, the proposed CCG $\Gamma$ possesses a non-empty core when the trading price $p_{tr}$ of the P2P energy trading is within the range $p_{s,g}\leq p_{tr}\leq p_{b,g}$.
\label{theorem:2}
\end{theorem}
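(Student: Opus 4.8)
The plan is to establish non-emptiness via the Bondareva--Shapley theorem (Definition~\ref{def:ShapelyTheorem}), so that the whole task reduces to verifying the single inequality \eqref{eqn:ShapelyTheorem} for every admissible balancing function $f$. The first step is to collapse the value function to one scalar per coalition. For any $\mathcal{S}\subseteq\mathcal{N}$, let
\[
z_{\mathcal{S}} = \sum_{n\in\mathcal{S}\cap\mathcal{N}_s}E_{n,\text{sur}} - \sum_{m\in\mathcal{S}\cap\mathcal{N}_b}E_{m,\text{def}}
\]
denote the net surplus of $\mathcal{S}$, so that $\nu(\mathcal{S}) = p_{s,g}\max(0,z_{\mathcal{S}}) - p_{b,g}\max(0,-z_{\mathcal{S}})$, i.e. the map in \eqref{eqn:ValueFunction3} evaluated at $z_{\mathcal{S}}$. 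I would then record two structural facts about this map, viewed as a function of $z_{\mathcal{S}}$: it is concave (already used in the proof of Theorem~\ref{theorem:1}, and immediate from $p_{b,g}>p_{s,g}$, which makes the left slope $p_{b,g}$ steeper than the right slope $p_{s,g}$), and it is positively homogeneous of degree one, since scaling $z_{\mathcal{S}}$ by any $\lambda\geq 0$ scales $\nu$ by $\lambda$. Concavity together with degree-one homogeneity gives $\nu(x+y)=2\nu(\tfrac{x+y}{2})\geq \nu(x)+\nu(y)$, hence superadditivity, and by iteration $\sum_i\nu(y_i)\leq\nu(\sum_i y_i)$.

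The second step is an aggregation identity for the balancing weights. Using the defining property $\sum_{\mathcal{S}\in\mathcal{P}_n}f(\mathcal{S})=1$ for every $n\in\mathcal{N}$ and exchanging the order of summation,
\[
\sum_{\mathcal{S}\in\mathcal{P}\setminus\phi} f(\mathcal{S})\,z_{\mathcal{S}} = \sum_{n\in\mathcal{N}_s}E_{n,\text{sur}}\!\!\sum_{\mathcal{S}\in\mathcal{P}_n}\!\!f(\mathcal{S}) - \sum_{m\in\mathcal{N}_b}E_{m,\text{def}}\!\!\sum_{\mathcal{S}\in\mathcal{P}_m}\!\!f(\mathcal{S}) = z_{\mathcal{N}}.
\]
Combining the two ingredients then closes the argument: because $f(\mathcal{S})\geq 0$, homogeneity lets me write $f(\mathcal{S})\nu(\mathcal{S})=\nu(f(\mathcal{S})z_{\mathcal{S}})$, and superadditivity over all non-empty coalitions with the aggregation identity gives
\[
\sum_{\mathcal{S}\in\mathcal{P}\setminus\phi} f(\mathcal{S})\,\nu(\mathcal{S}) = \sum_{\mathcal{S}\in\mathcal{P}\setminus\phi}\nu\big(f(\mathcal{S})\,z_{\mathcal{S}}\big) \leq \nu\Big(\sum_{\mathcal{S}\in\mathcal{P}\setminus\phi}f(\mathcal{S})\,z_{\mathcal{S}}\Big) = \nu(z_{\mathcal{N}}) = \nu(\mathcal{N}),
\]
which is exactly \eqref{eqn:ShapelyTheorem}. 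By the Bondareva--Shapley theorem, $\mathcal{C}$ is therefore non-empty.

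The step I would be most careful about is precisely this combination. The balancing weights $f(\mathcal{S})$ do not sum to one over $\mathcal{S}$ (only over the coalitions containing a fixed player), so a direct appeal to Jensen's inequality, as was used for the two equal halves in Theorem~\ref{theorem:1}, is \emph{not} legitimate here. It is the positive homogeneity of $\nu$ that rescues the argument, converting each weighted term into an unweighted one and letting iterated superadditivity play the role Jensen would otherwise play. I would state this homogeneity explicitly rather than gloss over it, since the proof of Theorem~\ref{theorem:1} only invoked concavity.

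Finally, to reconcile non-emptiness with the stated price range, I would exhibit an explicit core element produced by the trade itself: credit each seller $n$ with the revenue from selling $E_{n,\text{sur}}$ and each buyer $m$ with the (negative) cost of procuring $E_{m,\text{def}}$, pricing every intra-coalition exchange at $p_{tr}$ and settling only the aggregate residual $z_{\mathcal{N}}$ with the grid at $p_{s,g}$ or $p_{b,g}$. Efficiency $\sum_{n\in\mathcal{N}}e_n=\nu(\mathcal{N})$ then holds for any $p_{tr}$ because the internal transfers cancel, while coalitional rationality $\sum_{n\in\mathcal{S}}e_n\geq\nu(\mathcal{S})$ is where $p_{s,g}\leq p_{tr}\leq p_{b,g}$ is needed: a deviating sub-coalition can do no better than to settle its own net position $z_{\mathcal{S}}$ with the grid, and with $p_{tr}$ weakly above the grid sell price every seller, and weakly below the grid buy price every buyer, each member weakly prefers trading inside the grand coalition to defecting. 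Verifying this requires only splitting the sub-coalition into its net-surplus ($z_{\mathcal{S}}\geq 0$) and net-deficit ($z_{\mathcal{S}}<0$) cases and matching residuals; this bookkeeping, rather than any deep inequality, is the main routine obstacle in the constructive half, and it is what pins the guarantee to the interval $[p_{s,g},p_{b,g}]$.
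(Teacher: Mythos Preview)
Your proposal is correct and follows essentially the same route as the paper: the paper's proof of Theorem~\ref{theorem:2} simply invokes the Bondareva--Shapley condition and defers the verification of \eqref{eqn:ShapelyTheorem} to Theorem~2 of \cite{Lee_JSAC_2014}, whereas you supply those details explicitly (concavity plus positive homogeneity of $\nu$ in the net-surplus variable, together with the balancing-weight aggregation identity). Your additional constructive paragraph exhibiting a $p_{tr}$-indexed core allocation is not in the paper's proof of this theorem---it effectively anticipates the content of Theorem~\ref{theorem:3}---but it is a welcome addition, since the value function \eqref{eqn:ValueFunction} does not itself involve $p_{tr}$ and so the Bondareva--Shapley argument alone does not explain the role of the interval $[p_{s,g},p_{b,g}]$ in the statement.
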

\begin{proof}
According to Definition~\ref{def:ShapelyTheorem}, to show that the core $\mathcal{C}$ of the game $\Gamma$ is non-empty, it is sufficient to show that $\nu$ satisfies \eqref{eqn:ShapelyTheorem}. This can be proven by following the same procedure explained in Theorem~2 in \cite{Lee_JSAC_2014} (on page 1401).
\end{proof}
\subsubsection{Distribution of revenue}Now, according to Theorem~\ref{theorem:2}, it is clear that the value function $\nu$ of the proposed $\Gamma$ has a non-empty core, in which each participating prosumer $n\in\mathcal{N}$ receives a revenue that makes the coalition stable. To attain the revenue distribution, different techniques including shapley value~\cite{Lee_JSAC_2014}, nucleolus~\cite{Langer_TMC_2015}, and proportional fairness~\cite{Zhang_TWC_2014} have been used in the literature. In this paper, however, we propose to use the mid-market rate pricing~\cite{Long_Conf_2017} as a means of distributing the revenue between the prosumers. The main reasons behind using this technique are: 1) this is simple to implement without any computational complexity, which is important and expected for the practical deployment of P2P trading scheme~\cite{ChrisGiotitsas_TS_2015}, 2) its suitability for energy trading has been demonstrated by its deployment in a P2P energy trading testbed in Europe~\cite{Long_Conf_2017}, and 3) importantly, the pricing scheme ensures that the core of the proposed value function is non-empty, and therefore we achieve a stable coalition by using this mid-market rate.
\subsubsection*{Mid-market rate}\label{sec:MidMarketRate} According to mid-market rate, the price per unit of energy for energy trading is decided based on three different cases~\cite{Long_Conf_2017}: 1) Generation is equal to demand, 2) Generation is greater than demand, and 3) Generation is lower than demand. Nevertheless, in all cases, the energy trading \textit{between} the participants takes place with the price $p_{tr}$ per unit of energy, which is chosen to be the mid-value of the buying and selling prices set by the grid for its trading with the EUs. 
\begin{equation}
p_{tr} = \frac{p_{s,g} + p_{b,g}}{2}.
\label{eqn:TradingPriceP2P}
\end{equation}

\textit{\textbf{Case 1 - Generation is equal to demand:}} In this case, the net energy demand and production of all prosumers within the network is zero. That is, the total surplus energy $\sum_{n\in\mathcal{N}_s}E_{n,\text{sur}}$ of $N_s$ sellers is sold to the buyers of set $\mathcal{N}_b$. Hence, the selling price $p_{\text{s},tr}$ and buying price $p_{\text{b},tr}$ of each participant in $n\in\mathcal{N}_s$ and $m\in\mathcal{N}_b$ respectively are equal to one another and as same as the expression in \eqref{eqn:TradingPriceP2P}.

\textit{\textbf{Case 2 - Generation is greater than demand:}} In this case, the net energy production is non-zero, and therefore the sellers can sell the total surplus energy to the grid at a price $p_{s,g}$ per unit of energy after meeting the demand of prosumers with energy deficiency of the network. Clearly, the buying price $p_{\text{b},tr}$ of each buyer is
\begin{equation}
p_{\text{b},tr} = p_{tr} = \frac{p_{s,g} + p_{b,g}}{2}.
\label{eqn:GenerationMore1}
\end{equation}
The selling price $p_{\text{s},tr}$ per unit of energy in this case, however, depends on the total generation $\sum_{n\in\mathcal{N}_s}E_{n,\text{pv}}$, total demand $\sum_{n\in\mathcal{N}_b}E_{n,d}$ of prosumers with energy deficiency, and prices $p_{tr}$ and $p_{s,g}$. In particular, $p_{\text{s},tr}$ can be expressed as
\begin{equation}
\footnotesize
p_{\text{s},tr}=\frac{\displaystyle\sum_{m\in\mathcal{N}_b}E_{m,d}\times p_{tr} + \left(\sum_{n\in\mathcal{N}_s}E_{n,\text{sur}}-\sum_{m\in\mathcal{N}_b}E_{m,d}\right)\times p_{s,g}}{\sum_{n\in\mathcal{N}_s}E_{n,\text{sur}}}.
\label{eqn:GenerationMore2}
\end{equation}
In \eqref{eqn:GenerationMore2}, the numerator refers to the total revenue that the sellers of the network can earn by selling their surplus energy. The first term $\displaystyle\sum_{m\in\mathcal{N}_b}E_{m,d}\times p_{tr}$ denotes the revenue by selling energy to the buyers in $\mathcal{N}_b$ at a price $p_{tr}$ per unit of energy, and $\left(\sum_{n\in\mathcal{N}_s}E_{n,\text{sur}}-\sum_{m\in\mathcal{N}_b}E_{m,d}\right)\times p_{s,g}$ is the revenue gained from selling the rest of energy to the grid.

\textit{\textbf{Case 3 - Generation is lower than demand:}} In this case, the net energy demand within the network is non-zero. Therefore, the buyers of $\mathcal{N}_b$ need to meet their excess energy demand $\sum_{m\in\mathcal{N}_b}E_{m,d} - \sum_{n\in\mathcal{N}_s}E_{n,\text{sur}}$ from the grid. Indeed, as in case 1 and 2, the selling price $p_{\text{s},tr}$ that each seller $n\in\mathcal{N}_s$ charges the buyers for selling their surplus energy is equal to $p_{\text{s},tr}=p_{tr} = \frac{p_{s,g} + p_{b,g}}{2}$. The buying price $p_{\text{s},tr}$ per unit of energy, on the other hand, will be affected by the available total surplus $\sum_{n\in\mathcal{N}_s}E_{n,\text{sur}}$, total demand $\sum_{m\in\mathcal{N}_b}E_{m,d}$ and the prices $p_{tr}$ and $p_{b,g}$. That is
\begin{equation}
\small
p_{\text{b},tr} = \frac{\displaystyle\sum_{n\in\mathcal{N}_s}E_{n,\text{sur}}\times p_{tr}+\left(\sum_{m\in\mathcal{N}_b}E_{m,d} - \sum_{n\in\mathcal{N}_s}E_{n,\text{sur}}\right)\times p_{b,g}}{\sum_{m\in\mathcal{N}_b}E_{m,d}}.
\label{eqn:DemandMore1}
\end{equation}
Here, $\sum_{n\in\mathcal{N}_s}E_{n,\text{sur}}\times p_{tr}$ is the cost to the buyers for buying energy from the prosumers with energy surplus, and $\left(\sum_{m\in\mathcal{N}_b}E_{m,d} - \sum_{n\in\mathcal{N}_s}E_{n,\text{sur}}\right)\times p_{b,g}$ is the cost of buying the rest of the need from the grid.

Here, it is important to the note that, as the P2P trading is proposed, depending on the values of $\sum_{m\in\mathcal{N}_b}E_{m,d}$, $\sum_{n\in\mathcal{N}_s}E_{n,\text{sur}}$, $p_{b,g}$ and $p_{s,g}$, the values of  $p_{s,tr}$ and $p_{b,tr}$ are fixed for each time slot irrespective of whether the prosumers form a grand coalition or disjoint coalition. In other words, once a prosumer decides to trade energy with other prosumers instead of trading with the grid, it needs to buy and sell using $p_{b,tr}$ and $p_{s,tr}$ respectively set for that time slot. Note that we do not consider the regulatory charges within the pricing scheme. However, the P2P platform provider can charge the prosumers a fee incorporated in the trading price \cite{Yu_IoTJ_Dec_2014} and then pay the ISO a subscription fee for using its infrastructure for P2P trading~\cite{Tushar_SPM_July_2018}.

\begin{theorem}
For the considered mid-market rate pricing schemes in Case-1, Case-2 and Case-3, the core of the proposed CCG $\Gamma$ is non-empty, and therefore the formation of a stable grand coalition is confirmed.
\label{theorem:3}
\end{theorem}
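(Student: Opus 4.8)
The plan is to exhibit the mid-market-rate revenue vector $\mathbf{e}$ explicitly and to verify that it meets the two defining conditions of the core in Definition~\ref{def:Core}: budget balance, $\sum_{n\in\mathcal{N}}e_n=\nu(\mathcal{N})$, and coalitional rationality, $\sum_{n\in\mathcal{S}}e_n\geq\nu(\mathcal{S})$ for every $\mathcal{S}\subseteq\mathcal{N}$. Since Theorem~\ref{theorem:2} already guarantees $\mathcal{C}\neq\phi$, the substance of this statement is that the particular allocation induced by the prices $p_{\text{s},tr}$ and $p_{\text{b},tr}$ of Cases~1--3 is itself a member of $\mathcal{C}$.

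First I would record the per-prosumer revenue: under the mid-market rate every seller $n\in\mathcal{N}_s$ receives $e_n=p_{\text{s},tr}E_{n,\text{sur}}$ and every buyer $m\in\mathcal{N}_b$ incurs $e_m=-p_{\text{b},tr}E_{m,\text{def}}$, the signs recording receipt and expenditure. I would then check budget balance case by case. In Case~1 we have $p_{\text{s},tr}=p_{\text{b},tr}=p_{tr}$ together with $\sum_{n\in\mathcal{N}_s}E_{n,\text{sur}}=\sum_{m\in\mathcal{N}_b}E_{m,\text{def}}$, so $\sum_{n\in\mathcal{N}}e_n$ reduces at once to $0=\nu(\mathcal{N})$. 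In Cases~2 and~3 the definitions \eqref{eqn:GenerationMore2} and \eqref{eqn:DemandMore1} of $p_{\text{s},tr}$ and $p_{\text{b},tr}$ were constructed precisely so that the buyers' payments plus the grid settlement equal the sellers' receipts; substituting them and cancelling the common $p_{tr}$ term leaves exactly $p_{s,g}\big(\sum_{n\in\mathcal{N}_s}E_{n,\text{sur}}-\sum_{m\in\mathcal{N}_b}E_{m,\text{def}}\big)$ in Case~2 and $-p_{b,g}\big(\sum_{m\in\mathcal{N}_b}E_{m,\text{def}}-\sum_{n\in\mathcal{N}_s}E_{n,\text{sur}}\big)$ in Case~3, which are $\nu(\mathcal{N})$ by \eqref{eqn:ValueFunction}.

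For coalitional rationality the decisive ingredient is the observation stressed just after \eqref{eqn:DemandMore1}, that $p_{\text{s},tr}$ and $p_{\text{b},tr}$ are fixed for the time slot and do not vary with the coalition that actually forms, together with the ordering $p_{s,g}\leq p_{\text{s},tr}\leq p_{tr}\leq p_{\text{b},tr}\leq p_{b,g}$ that every case satisfies. For an arbitrary $\mathcal{S}=\mathcal{S}_s\cup\mathcal{S}_b$ I would set $A=\sum_{n\in\mathcal{S}_s}E_{n,\text{sur}}$ and $B=\sum_{m\in\mathcal{S}_b}E_{m,\text{def}}$, so that $\sum_{n\in\mathcal{S}}e_n=p_{\text{s},tr}A-p_{\text{b},tr}B$, and compare it with $\nu(\mathcal{S})$, which by \eqref{eqn:ValueFunction} equals $p_{s,g}(A-B)$ when $A\geq B$ and $-p_{b,g}(B-A)$ when $A<B$. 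Because the fixed prices deny a deviating subset any freedom to re-price its internal trades, the value such a subset can secure on its own is bounded by $p_{\text{s},tr}A-p_{\text{b},tr}B$, and the required inequality then follows from the price ordering through an elementary sign check in the two sub-cases $A\geq B$ and $A<B$, crossed with the three possibilities for the grand coalition. Assembling budget balance with coalitional rationality places $\mathbf{e}$ in the non-empty core, so a stable grand coalition is confirmed.

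I expect the genuine obstacle to sit in the coalitional-rationality step for mixed subsets that contain both sellers and buyers and whose net position $A-B$ carries the opposite sign to that of the grand coalition; there the crude bounds $p_{s,g}\leq p_{\text{s},tr}$ and $p_{\text{b},tr}\leq p_{b,g}$ do not settle the inequality term by term, and a balanced subset with $A=B$ in particular forces one to weigh $p_{\text{s},tr}A-p_{\text{b},tr}B$ against $\nu(\mathcal{S})=0$ directly. It is exactly here that the fixed-price property must be invoked rather than the face-value reading of $\nu(\mathcal{S})$, since it is what prevents such a subset from profitably adopting Case~1 terms after leaving. Turning that price-invariance from an informal principle into a quantitative bound is the step that will need the most care.
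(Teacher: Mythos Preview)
Your plan aims at a stronger statement than the theorem actually asserts: you try to show that the mid-market-rate allocation $\mathbf{e}$ itself lies in the core, whereas the paper only proves that the core is non-empty. The paper's argument is far shorter: it simply verifies that the prices $p_{\text{s},tr}$ and $p_{\text{b},tr}$ produced by \eqref{eqn:GenerationMore2} and \eqref{eqn:DemandMore1} fall in the interval $[p_{s,g},p_{b,g}]$ (writing $p_{\text{s},tr}=k\,p_{tr}+(1-k)p_{s,g}$ as a convex combination and using $p_{b,g}>p_{s,g}$), and then invokes Theorem~\ref{theorem:2}. Your budget-balance computation is fine, and would indeed be needed for the stronger claim; but the theorem as stated does not require it.

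The real problem is that the stronger claim you are after is false in general under Definition~\ref{def:Core}, so your coalitional-rationality step cannot be completed. Take Case~3, where $p_{\text{s},tr}=p_{tr}$ and $p_{\text{b},tr}>p_{tr}$, and consider a subset $\mathcal{S}$ consisting of one seller and one buyer with $A=B$. Then $\nu(\mathcal{S})=0$ by \eqref{eqn:ValueFunction}, but $\sum_{n\in\mathcal{S}}e_n=(p_{\text{s},tr}-p_{\text{b},tr})A<0$, so $\mathbf{e}\notin\mathcal{C}$. You correctly sensed this obstacle in your last paragraph, but your proposed rescue---arguing that a deviating subset is still bound by the fixed slot prices---amounts to replacing $\nu(\mathcal{S})$ in \eqref{eqn:Core} by a different, smaller quantity. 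That changes the game; it is not the core of $\Gamma$ as defined. The remark after \eqref{eqn:DemandMore1} about fixed prices describes the operation of the platform, not the characteristic function, which in \eqref{eqn:ValueFunction} is computed purely from grid prices. In short, stick to what the theorem states: show $p_{s,g}\leq p_{\text{s},tr},p_{\text{b},tr}\leq p_{b,g}$ by the convex-combination argument and appeal to Theorem~\ref{theorem:2}.
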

\begin{proof}
To prove this theorem, first we note from Theorem~\ref{theorem:2} that a grand coalition for the proposed CCG $\Gamma$ is always stable for a pricing scheme, in which the trading price, which includes both selling and buying prices $p_{\text{s},tr}$ and $p_{\text{b},tr}$ respectively, satisfies the condition $p_{s,g}\leq\{p_{\text{s},tr},p_{\text{b},tr}\}\leq p_{b,g}$. Second, it is clear from the description of above three cases that both trading prices in Case 1, the buying price $p_{\text{b},tr}$ in Case 2, and the selling price $p_{\text{s},tr}$ in Case 3 satisfy the above mentioned condition. Therefore, proving that both \eqref{eqn:GenerationMore2} and \eqref{eqn:DemandMore1} also satisfy the conditions $p_{s,g}\leq p_{\text{s},tr}\leq p_{b,g}$ and $p_{s,g}\leq p_{\text{b},tr}\leq p_{b,g}$ respectively is sufficient to complete the proof of Theorem~\ref{theorem:3}.

As such, let us first assume that $\frac{\sum_{m\in\mathcal{N}_b}E_{m,d}}{\sum_{n\in\mathcal{N}_s}E_{n,\text{sur}}} = k$, where $k<1$ for Case 2 (as $\sum_{n\in\mathcal{N}_s}E_{n,\text{sur}}>\sum_{m\in\mathcal{N}_b}E_{m,d}$), and based on this assumption \eqref{eqn:GenerationMore2} can be written as 
\begin{equation}
p_{\text{s},tr} = k\times p_{tr}+(1-k)\times p_{s,g} = (k\times p_{tr} + p_{s,g}) - k\times p_{s,g}.
\label{eqn:theorem3_1}
\end{equation}
Now, from \eqref{eqn:TradingPriceP2P}, clearly $p_{tr}>p_{s,g}$ as $p_{b,g}>p_{s,g}$. Hence, from \eqref{eqn:theorem3_1}, we can confirm that $p_{\text{s},tr}\geq p_{s,g}$. Now, to prove that $p_{\text{s},tr}\leq p_{b,g}$, first we consider that $p_{\text{s},tr}>p_{b,g}$, and therefore, from \eqref{eqn:theorem3_1}
\begin{equation}
k\times p_{tr} + p_{s,g} - k\times p_{s,g}>p_{b,g}.
\label{eqn:theorem3_2}
\end{equation}
Then, replacing $p_{tr}$ with $\frac{p_{s,g} + p_{b,g}}{2}$, and rearranging the terms, \eqref{eqn:theorem3_2} can be expressed as
\begin{equation}
p_{s,g} - \frac{k}{2}p_{s,g}>p_{b,g}-\frac{k}{2}p_{b,g},
\label{eqn:theorem3_3}
\end{equation}
which is not possible as $p_{b,g}>p_{s,g}$ and $k<1$. Hence, $p_{\text{s},tr}\leq p_{b,g}$. So, $p_{\text{s},tr}$ in \eqref{eqn:GenerationMore2} satisfies the condition $p_{s,g}\leq p_{\text{s},tr}\leq p_{b,g}$.

Similarly, by assuming $\frac{\sum_{n\in\mathcal{N}_s}E_{n,\text{sur}}}{\sum_{m\in\mathcal{N}_b}E_{m,d}} = k'$ in \eqref{eqn:DemandMore1}, and following the same procedures as described for $p_{s,g}$ in \eqref{eqn:GenerationMore2}, it can be proven that $p_{\text{b},tr}$ in \eqref{eqn:DemandMore1} also satisfies the condition $p_{s,g}\leq p_{\text{b},tr}\leq p_{b,g}$, and thus Theorem~\ref{theorem:3} is proven.
\end{proof}
\begin{remark}
An underlying assumption in the proposed scheme is that only the grid and the designed trading platform that provides the P2P trading services are in the considered system. Hence, there is no other competitor that offers different services for P2P trading. However, if more competitors exist, the game needs to be designed in a different manner, which is an interesting extension for future work. Nonetheless, if there is no competitor, the grand coalition proposed in this study is a stable coalition.
\end{remark}
\subsection{Consumer-Centric Property}\label{sec:PeopleCentric}As we mentioned in Section~\ref{sec:introduction}, for the successful establishment of an energy trading scheme, it is critical that the prosumers within the system are actively participating in the trading. In this context, the purpose of this section is to understand how people can be motivated to participate in energy trading, and then investigate whether the proposed energy trading scheme in this paper fulfills the requirements of the motivational models, and thus can be considered as consumer-centric.

To this end, first, we note that motivation is closely related to the emotional process that initiates behavior~\cite{Hockenbury:2003}. It is a branch of behavioural science that helps us to understand how to mediate the psychological process that guides real behavior~\cite{Beebe:1999,Miller:2002}, and has been used in engineering~\cite{Franca_IST:2014,Beecham_IST:2008}, public health~\cite{Glanz_ARPH:2010}, education system~\cite{Hermida_TNE:2015}, economics~\cite{Stein_IO:2017}, and medicine~\cite{Carr_IO:2017}. In motivational psychology, there are several models that can be studied to show how users can be motivated to adopt a certain behavior. Examples of such behavioural models, which are also the subject of this section, include the rational-economic model and the positive reinforcement model.
\begin{definition}
The rational-economic model establishes that people adopt pro-environmental behavior based on economically rational decisions~\cite{Shipworth:2000}. In other words, monetary cost is the key motivator for people to take necessary actions, e.g., participating in the P2P energy trading. 
\label{def:RationalEconomic}
\end{definition}
\begin{definition}
 A positive reinforcement defines the case when a human response to a situation is followed by a reinforcing stimulus that increases the possibility of having the same response from them when a similar situation occurs~\cite{Hockenbury_2003}. 
 \label{def:PositiveReinforcement}
\end{definition}
According to these definitions, a trading scheme that satisfies both rational-economic and positive reinforcement properties has very high possibility to be accepted by the customers in the market, and hence would be a consumer-centric scheme. Therefore, to confirm that the proposed trading scheme is consumer-centric, it is sufficient to show that the proposed scheme satisfies both the models. 

Now, clearly, for the considered mid-market rate pricing scheme, the proposed $\Gamma$ based P2P energy trading scheme satisfies the rational-economic model due to the following facts 
\begin{itemize}
\item The value of coalition in \eqref{eqn:ValueFunction} is defined in terms of \textit{monetary revenue} that the coalition attains from participating in the proposed P2P energy trading.
\item In Theorem~\ref{theorem:2}, it is shown that the core of the proposed $\Gamma$ is non-empty. Therefore, there exists a revenue distribution vector for the participating prosumers such that none of the prosumers would have any incentive to leave the grand coalition.
\item Finally, in Theorem~\ref{theorem:3}, it is proven that the revenue that each prosumer obtains by using the mid-market rate pricing scheme in the proposed $\Gamma$  lies within the core of the game.
\end{itemize}
Therefore, all the participating prosumers in the proposed $\Gamma$ are always satisfied with the monetary revenues that they receive by participating in the proposed P2P.  
\begin{theorem}
The discussed P2P energy trading technique based on the proposed $\Gamma$ complies with the property of positive reinforcement.
\label{theorem:5}
\end{theorem}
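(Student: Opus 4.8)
The plan is to map the three components of the positive reinforcement definition---the situation, the response, and the reinforcing stimulus---onto the proposed P2P trading framework, and then to show that the stimulus is consistently favorable. First I would identify the \emph{situation} as the recurring event in which a prosumer, at a given time slot, finds itself with either surplus energy $E_{n,\text{sur}}$ (acting as a seller) or deficit energy $E_{n,\text{def}}$ (acting as a buyer). The \emph{response} is the prosumer's decision to participate in the P2P trading coalition rather than to trade exclusively with the CPS. The \emph{reinforcing stimulus} is then the monetary outcome---higher revenue for a seller, lower cost for a buyer---that the prosumer obtains as a direct consequence of this response.

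The core of the argument rests on Theorem~\ref{theorem:3}, which guarantees that the mid-market trading prices satisfy $p_{s,g}\leq p_{\text{s},tr}\leq p_{b,g}$ and $p_{s,g}\leq p_{\text{b},tr}\leq p_{b,g}$ in every case. I would use these bounds to compare the P2P outcome against the grid-only benchmark on a per-participation basis. For a seller $n\in\mathcal{N}_s$, selling the surplus $E_{n,\text{sur}}$ at $p_{\text{s},tr}\geq p_{s,g}$ yields revenue at least as large as selling it to the grid at $p_{s,g}$; for a buyer $m\in\mathcal{N}_b$, purchasing its deficit $E_{m,\text{def}}$ at $p_{\text{b},tr}\leq p_{b,g}$ incurs cost no larger than buying from the grid at $p_{b,g}$. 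Hence, in every time slot and for every role, participation delivers a nonnegative---and generically strictly positive---monetary advantage relative to the noncooperative alternative, so the stimulus that follows the response is always favorable.

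Having established that the stimulus is favorable at each participation, the final step is to argue that it \emph{recurs}. Because Theorem~\ref{theorem:2} and Theorem~\ref{theorem:3} together ensure that the grand coalition is stable and that the favorable price bounds persist whenever prosumers choose to trade, the reinforcing stimulus reappears each time the same situation (surplus or deficit) arises. By Definition~\ref{def:PositiveReinforcement}, a response that is repeatedly followed by such a reinforcing stimulus increases the likelihood that the prosumer repeats the response, and therefore the proposed $\Gamma$ based scheme complies with positive reinforcement.

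I expect the main obstacle to be conceptual rather than computational: positive reinforcement is a behavioral notion, so the burden is to argue convincingly that the guaranteed price bounds of Theorem~\ref{theorem:3} constitute a genuine \emph{reinforcing} stimulus in the sense of Definition~\ref{def:PositiveReinforcement}, rather than merely an incidental one-off gain. The key is to stress that the benefit is realized \emph{every} time a prosumer participates, irrespective of whether it acts as a buyer or a seller, which is precisely the feature that separates a reinforcing stimulus from a sporadic advantage and thus closes the argument.
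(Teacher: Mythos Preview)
Your proposal is correct and follows essentially the same approach as the paper: both arguments invoke Theorem~\ref{theorem:3} to establish that, in every one of the three pricing cases, each prosumer obtains a favorable outcome upon participation, and then conclude via Definition~\ref{def:PositiveReinforcement} that this repeated benefit constitutes positive reinforcement. Your version is somewhat more explicit in mapping the situation/response/stimulus triad and in comparing the P2P price bounds directly against the grid-only benchmark, whereas the paper phrases the same point as ``the revenue distribution always lies within the core,'' but the underlying logic is identical.
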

\begin{proof}
According to Definition~\ref{def:PositiveReinforcement}, the positive reinforcement property confirms that a prosumer will get positive encouragement to do anything, for example, participating in the P2P energy trading in this case, if he receives the similar positive outcome every time he participates. Now, we note that at any time of a day, depending on the demand and supply of energy to the prosumers,  the proposed CCG is conducted for any of the three cases mentioned in Section~\ref{sec:MidMarketRate}. Now, it is proven in Theorem~\ref{theorem:3} that for each of the three cases the revenue distribution always lies within the core of the game. In other words, regardless of the type of the case, the prosumers benefit every time they participate in the proposed P2P energy trading. Thus, the proposed P2P energy trading satisfies the positive reinforcement model and the theorem is proven.
\end{proof}
Therefore, it is reasonable to say that the proposed P2P energy trading scheme is a consumer-centric scheme.

\section{Case Study}\label{sec:CaseStudy}
To show the effectiveness of social cooperation in the proposed P2P energy trading scheme, in this section we demonstrate some results from numerical case studies. In particular, we show that the proposed P2P energy trading scheme 1) ensures the formation of a stable coalition among the participating prosumers, 2) brings benefits to the participants in terms of reducing overall energy usage cost compared to the non-participating prosumers, and 3) satisfies the consumer-centric property. For the numerical study, we use publicly available real-data on solar generation\footnote{Public solar data is collected from IEEE PES ISS website.} and household energy demand of residential consumers\footnote{Residential data is available in the website of National Energy Efficiency Alliance (NEEA).}.  We consider five residential houses as prosumers, each of which is equipped with a $3$kWp solar panel. We use $15$ minute resolution data to validate the model, and the data used for this case study was recorded in December $2013$. The values of grid's buying price ($p_{s,g}$) and FiT price ($p_{b,g}$) are assumed to be $24.6$ and $10$ cents/kWh respectively according to the electricity price in Brisbane, Australia.
\begin{figure}[t]
\centering
\includegraphics[width=\columnwidth]{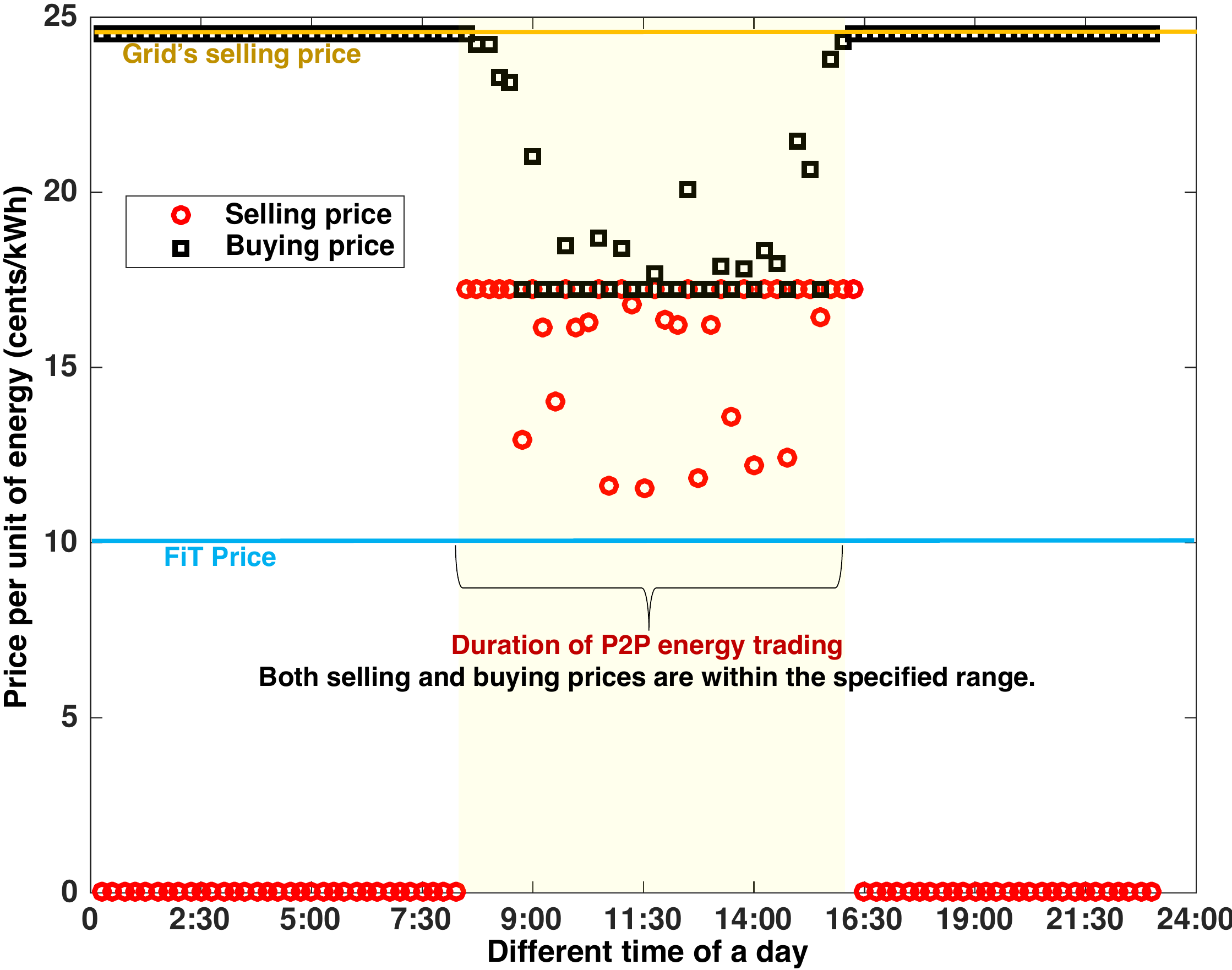}
\caption{This figure demonstrates the buying and selling price per unit of energy at different time slots of the day. The price during P2P energy trading is only visible when solar energy is available.}
\label{fig:result:1}
\end{figure}
\subsubsection{Formation of stable coalition}We note that the coalition between the participating prosumers in the proposed CCG is stable if the CCG possesses a non-empty core. From Theorem~\ref{theorem:2}, the condition of having a non-empty core is that the trading prices including both the buying and selling prices per unit of energy (i.e., $p_{\text{b},tr}$ and $p_{\text{s},tr}$ respectively) during the P2P energy trading always need to be within the range $\left[p_{s,g}, p_{b,g}\right]$. In this context, we show the trading price per unit of energy for a single day (December $2$, 2013) in Fig.~\ref{fig:result:1}. From the figure, first we note that P2P trading only takes place from $8.00$ am to $3.00$ pm as that is the duration of time when the sun was available to produce energy from the prosumers' solar panels. Consequently, for the rest of the day, prosumers need to rely on the power from the CPS and do not cooperate with one another. Second, during the P2P trading period, it is obvious from Fig.~\ref{fig:result:1} that the trading prices, which are developed based on the mid-market rate in \eqref{eqn:TradingPriceP2P}, \eqref{eqn:GenerationMore1}, \eqref{eqn:GenerationMore2}, and \eqref{eqn:DemandMore1}, are always within the specified range between the grid's selling price and FiT price. Hence, every time the households participate in the proposed P2P energy trading scheme, they form a stable grand coalition for trading energy with one another to maximize their benefits in terms of cost saving.
\begin{table}[t]
\centering
\caption{This table demonstrates the total cost to different prosumers .}
\includegraphics[width=\columnwidth]{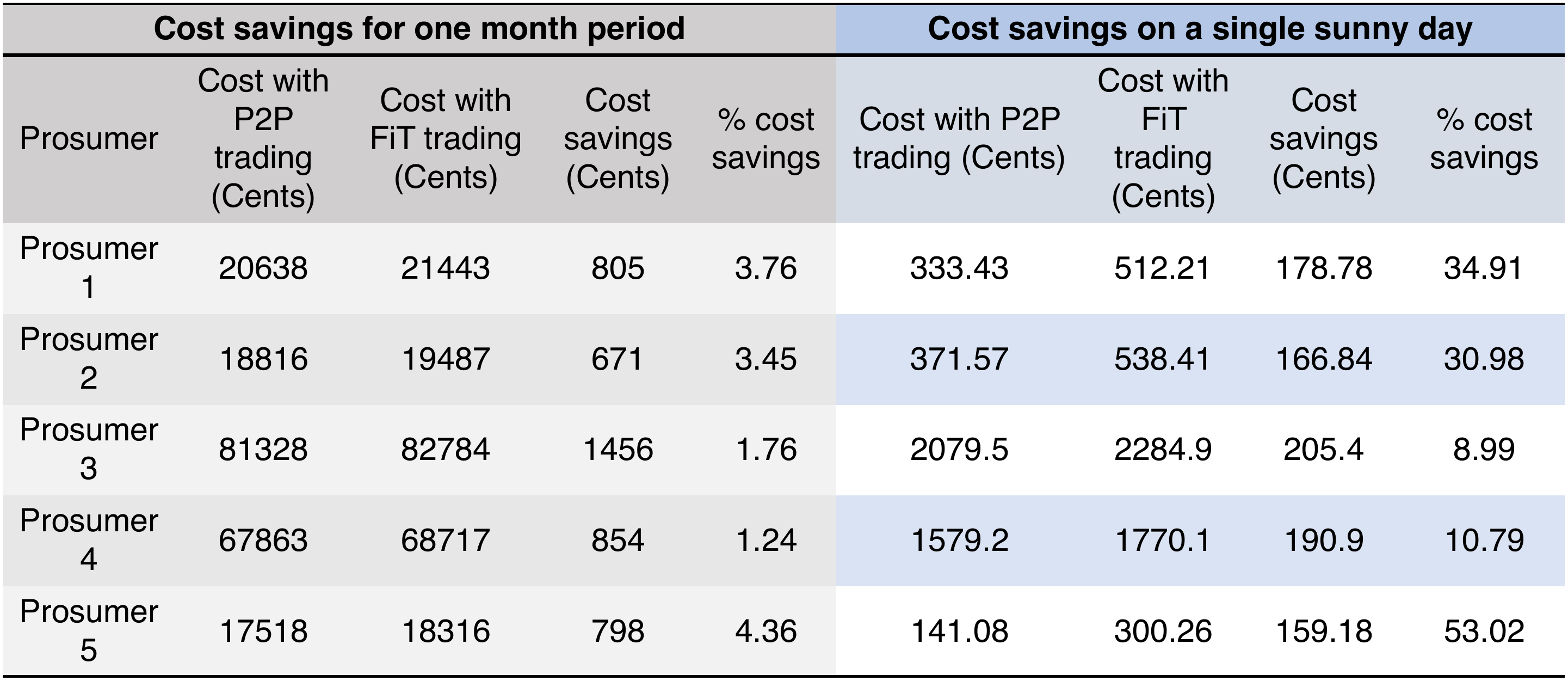}
\label{table:table1}
\end{table}
\subsubsection{Cost saving to each prosumer} To demonstrate how the proposed P2P trading may help each prosumer to reduce its cost of energy usage, we show the total cost to each of the five prosumers for a month (December 2013) in Table~\ref{table:table1}. The demonstrated costs include both the cost for adopting the proposed P2P scheme as well as for using the traditional FiT scheme. Now, based on the information illustrated in Table~\ref{table:table1}, the P2P scheme always outperforms the current FiT scheme in terms of reducing the total cost to each prosumer. For example, prosumer $1, 2, 3, 4$ and $5$ can save around $\$8.05$, $\$6.71$, $\$14.56$, $\$8.54$, and $\$7.98$ or percentage savings of $3.76\%$, $3.45\%$, $1.76\%$, $1.24\%$, and $4.36\%$ respectively. Based on this result, we can summarize that percentage savings are different for different users. A lower percentage can be translated into relatively large monetary savings compared to other prosumers with higher percentage savings. For instance, although prosumer $3$ and prosumer $5$ have percentage savings of $1.76\%$ and $4.36\%$ respectively, their actual total monetary savings in the respective month are $\$14.56$ and $\$8.50$ respectively.

Further, we note that the percentage savings for the P2P scheme are not substantial compared to the FiT scheme as shown in Table~\ref{table:table1}, which is, in fact, a result of the lower sunshine hours at different days of the month. For example, we observe from the dataset that in multiple days of the month, there was zero production of energy from the solar panels, which subsequently increases the total cost to the prosumers significantly (thus reducing the savings) across the whole month. Nonetheless, the savings can significantly improve on sunny days. For example, on December 2, 2013 (in the right-hand side of the same table), the percentage cost saving to each prosumer varied from as low as $9\%$ to as high as $53\%$, which is a substantial saving. The difference in cost savings is, however, for the same reason mentioned previously. Nevertheless, it is obvious from this result that the social cooperation between the prosumers in the proposed CCG based P2P trading scheme has the potential to bring benefit to the prosumers compared to the case without any cooperation.

\begin{figure}[t]
\centering
\includegraphics[width=\columnwidth]{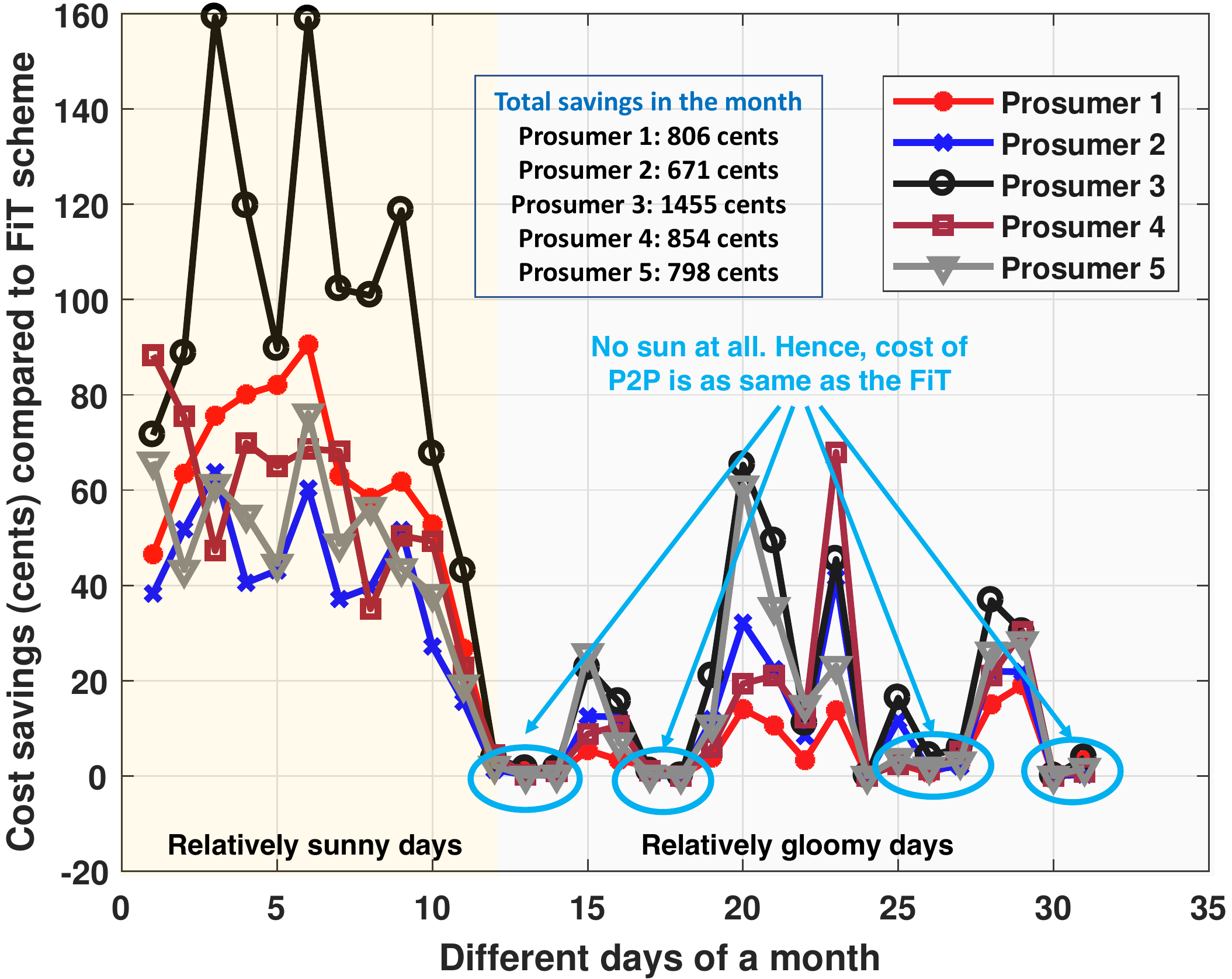}
\caption{This figure demonstrates how the proposed CCG satisfies the positive reinforcement property for the proposed pricing scheme.}
\label{fig:PositiveReinforcement}
\end{figure}
\subsubsection{Attainment of consumer-centric property}Finally, we show if the proposed P2P trading scheme demonstrates the consumer-centric property. To this end, first, we note that the rational-economic property of the proposed scheme is already demonstrated in the previous section, in which we show that the proposed scheme always brings economic benefit to each prosumer. Hence, to confirm that the proposed scheme possesses the consumer-centric property, it is sufficient to show that the proposed scheme also satisfies the positive-reinforcement property.

In this context, in Fig.~\ref{fig:PositiveReinforcement}, we show the benefit to five prosumers in terms of cost savings compared to the FiT scheme for each day of a month. As can be seen from the figure, social cooperation always benefits the prosumers as long as the day is not without sun. Of course, sunny days (as demonstrated by the earlier days of the month) benefit the prosumers more compared to days with relatively less sunshine time (demonstrated by the later days of the month). However, in the worst case when there is no sun in the sky, the cost savings compared to FiT is zero, that is, the cost of social cooperation is the same as the cost for participating in the FiT scheme. Nonetheless, such events are not very frequent as can be seen from the figure. Therefore, it can be concluded that the benefit to the houses for participating in P2P via social cooperation is \textit{consistent across each day the month} depending on the sunshine time each day and never detrimental (never increases the cost to the prosumers compared to the existing FiT). As a result, the overall cost savings across a month is also noticeable. For instance, as shown in Fig.~\ref{fig:PositiveReinforcement}, the total cost savings per month to prosumer 1, 2, 3, 4, and 5 are $\$8.00, \$6.70, \$14.50, \$8.50$ and $\$8.00$ respectively. Hence, it is reasonable to establish that the proposed P2P scheme always benefits the prosumers, and thus satisfies the positive reinforcement model. Hence, the proposed scheme is consumer-centric.
\section{Conclusion}\label{sec:Conclusion}In this paper, we have studied the feasibility of social cooperation among prosumers in confirming sustainable users' participation in P2P energy trading. To do so, a P2P energy trading scheme inspired by a CCG has been proposed and its different properties have been studied. It has been shown that the game possesses a non-empty core, which confirms the stability of the grand coalition among the participating prosumers. Further, a mid-market rate based pricing scheme has been proposed and it has been demonstrated that the revenue that each prosumer received for this mid-market rate lies within the core. Furthermore, to confirm the sustainable participation of the prosumers in the proposed P2P energy trading scheme, we have introduced two models from motivational psychology, and have shown that the proposed P2P scheme satisfies both of them. Note that this consequently has proven the potential of users' acceptance of the proposed scheme. Finally, we have provided numerical case studies to prove the claims that we made in the paper. 

A potential extension of this work would be to incorporate network constraints such as voltage constraints, thermal constraints, and ramp rates into the designed model and identify how this impacts user participation in the P2P trading. Another future extension of the proposed work would be to investigate how the behavior of the system is affected if an integrated storage device with each of the prosumers' solar systems is considered. Further, another interesting future extension of the proposed work is to determine how the stability of the proposed coalition is affected when there are multiple P2P energy trading platform providers in the network offering different pricing schemes for trading.

\end{document}